\newtheorem{theorem}{Theorem}[section]
\newtheorem{assumption}{Assumption}
\title{\LARGE \bf
Adaptive Extremum Seeking Using Recursive Least Squares}
\author{Nursefa Zengin, Baris Fidan
\thanks{{School of Engineering, University of Waterloo, ON, Canada, ~\tt\small{nyarbasi,fidan@uwaterloo.ca}}}
}
\begin{document}

\maketitle
\thispagestyle{empty}
\pagestyle{empty}

\begin{abstract}
Extremum seeking (ES) optimization approach has been very popular due to its non-model based analysis and implementation. This approach has been mostly used with gradient based search algorithms. Since least squares (LS) algorithms are typically observed to be superior, in terms of convergence speed and robustness to measurement noises, over gradient algorithms, it is expected that LS based ES schemes will also provide faster convergence and robustness to sensor noises. In this paper, with this motivation, a recursive least squares (RLS) estimation based ES scheme is designed and analysed for application to scalar parameter and vector parameter static map and dynamic systems. Asymptotic convergence to the extremum is established for all the cases. Simulation studies are provided to validate the performance of proposed scheme.

\end{abstract}

\section{Introduction}

Extremum seeking (ES) is a popular technique for adaptive optimization of the performance of dynamic systems by tuning certain system parameters based on measurements. The main advantage of this technique is that limited or no knowledge of the plant model is required. ES is suitable for optimization of the performance of systems with complex dynamics, unavailable suitable measurements to validate the model, and time-varying disturbances that are difficult to model accurately (\cite{krsticbook}).

The most common ES algorithm used in the literature is the classical band-pass filtering based one, in which the gradient of the output with respect to the input will determine the direction of adjusting the input variables. This method was successfully applied to different application areas including biochemical reactors [\cite{hsin1999,bastin2009}], ABS control in automotive brakes (\cite{drakunov1995,krsticbook,yu2002,dincmen2014,dincmen2014b}), mobile robots (\cite{mayhew2007,zhang2009,lin2017}), mobile sensor networks (\cite{biyik2008,stankovic2009,moore2010}).

Among other types of ES algorithms, perturbation based ES relies on added perturbation signals to estimate the gradient of the output by correlating the perturbations. To overcome the implementation drawbacks of introducing perturbation signals, some methods that are free of perturbation signals have been developed by  \cite{fu_ozguner,lsesc,nesic_nodither}.

Convergence rate of conventional ES algorithms is a limiting factor in many applications. Recursive Least Squares (RLS) based estimation has significant potential in relaxing this limitation and improving robustness to measurement noises.  \cite{escrls2016,lsesc,nesicls} used certain LS based techniques in their ES algorithms to obtain better convergence results. \cite{escrls2016} estimated the gradient of the
output with respect to the input using a LS based adaptive law for a class of nonlinear dynamic systems together with a sinusoidal perturbation signal. \cite{lsesc} used past data of a performance map to estimate the gradient of this performance map by a first order LS fit. The proposed method used no dither signal, but utilized a time window of history data of the performance map. \cite{nesicls} provided general results and a framework for the design of ES schemes applied to systems with parametric uncertainties and used LS algorithm to estimate unknown parameters of the known system.

In absence of the parameter knowledge, a series of control/optimization schemes have been proposed in the literature utilizing certain ES tools such as switching methods (\cite{blackman1962}), signal perturbation for persistence excitation, and band pass filtering (\cite{blackman1962},\cite{krsticextremum},\cite{krsticwang},\cite{nesicls}. \cite{guay2014} and \cite{guay2015} used a discrete time ES scheme to estimate the gradient as a time-varying parameter using LS like update laws. They removed the need for averaging system in order to achieve the convergence of ES. The designs are simulated for static unknown maps, systems with unknown discrete-time dynamics and sampled-data systems. 
 
In this paper, a continuous time RLS parameter estimation based ES scheme is designed and analysed for scalar parameter and vector parameter static map and dynamic systems. Asymptotic convergence to the extremum is established for each case. Numerical simulation examples are provided to validate the performance of proposed scheme comparing the results with gradient parameter estimation based one. A specific simulation example, antilock braking systems (ABS), in \cite{krsticbook} is studied to compare the performance of RLS estimation based ES with classical gradient based ES.

Contents of this paper are as follows. Section II is dedicated to the problem statement. In Section III, existing classical perturbation based ES is reviewed. Proposed RLS estimation based adaptive ES is developed for scalar parameter systems in Section IV, and for vector parameter systems in Section V. Comparative simulation examples are presented in Section VI. Finally, conclusions of the paper are given in Section VII.

\section{Problem Statement}
The ES problem of interest is defined for static map systems and dynamic systems separately in the following subsections. 

\subsection{Static Maps} \label{section1}
Consider a concave static map system
\begin{equation} \label{staticmap_ls}
\begin{aligned}
y=h_s(u)= \bar{h}_s(\theta^*,u), \quad \theta^*=\begin{bmatrix} \theta^*_{1} & \cdots & \theta^*_{N} \end{bmatrix} ^{T},
\end{aligned}
\end{equation}
where $\theta^* \in \mathbb{R}^{N}$ is a fixed unknown parameter vector, $u \in \mathbb{R}^{m}$ is the input and $y \in \mathbb{R}$ is the output of the system. Assume that the control input signal $u$ is generated by a smooth control law 
\begin{equation} \label{systemforlsinput}
u=\alpha(\theta)
\end{equation}
parametrized by a control parameter vector $\theta \in \mathbb{R}^{N}$.
\begin{assumption} \label{assump_1}
The static map  $\bar{h}_s(\theta^*,u)$ is smoothly differentiable. \end{assumption}
\begin{assumption} \label{assump_2}
$h_s(u)= \bar{h}_s(\theta^*,u)$ has a single extremum (maximum) $y^*$ at $u=\alpha(\theta^*).$
\end{assumption}
The control objective is to maximize the steady-state value of $y$ but without requiring the knowledge of $\theta^*$ or the system function $h_s$.
\subsection{Dynamic Systems} \label{section2}
Consider a general multi-input-single-output (MISO) nonlinear system
\begin{equation} \label{systemforls}
\dot{x} = f(x,u) = \bar{f}(\theta^*,x,u), 
\end{equation}
\begin{equation} \label{y_ls}
y = h_d(x)= \bar{h}_d(\theta^*,\theta) = h(\theta),
\end{equation}
\begin{equation} \label{y_ls2}
\theta=\pi(x)
\end{equation}
where $x\in\mathbb{R}^{n}$ is the state, $u\in\mathbb{R}^{m}$ is the input, $y\in\mathbb{R}$ is the output, all measurable, and $f:\mathbb{R}^{n} \times \mathbb{R}^{m}\rightarrow\mathbb{R}^n$ and $h_d=h\circ \pi$ are smooth functions. Assume that the control input signal $u$ is in the form (\ref{systemforlsinput}), the control parameter $\theta \in \mathbb{R}^N$ is dependant on $x$ through a map $\pi(.):\mathbb{R}^n \rightarrow \mathbb{R}^N$.

The closed loop system can be written as follows:
\begin{equation} \label{closed-loop-system}
\dot{x}=f(x, \alpha(\theta)) = f(x,\alpha(\pi(x)).
\end{equation}
The equilibria of (\ref{closed-loop-system}) can be parameterized by $\theta$. The following assumptions about the closed loop system (\ref{systemforls}) are made, similarly to \cite{krsticwang}.
\begin{assumption} \label{assump3}
There exists a smooth function $l:\mathbb{R}^N\rightarrow\mathbb{R}^{m}$ such that
\begin{equation}
f(x, \alpha(x,\theta))=0\quad  \text{if and only if}\quad x=l(\theta),
\end{equation}
for any $(x,\theta) \in \mathbb{R}^m  \times \mathbb{R}^{N}.$ For each $\theta \in \mathbb{R}^N$, the equilibrium $x_e=l(\theta)$ of the system (\ref{closed-loop-system}) is locally exponentially stable with decay and overshoot constants uniformly dependent on $\theta$.
\end{assumption}

\begin{assumption} \label{assump5}
There exists $\theta^{*}\in\mathbb{R}^N$ such that for all admissible $x$ values, $h_d(x)$ has its unique maximum at $x=x^*=l(\theta^*),$
\begin{equation}
y^{\prime} (x^*)={\frac{\partial h}{\partial x}}\Bigr|_{\substack{x=x^*}}= 0, 
\end{equation}
and the $m \times m$ Hessian matrix $y^{\prime\prime} (x^*)= {\frac{\partial^2  h}{\partial x^2}}\Bigr|_{\substack{x=x^*}}$ is negative definite.
\end{assumption}
The control objective is to maximize the steady-state value of $y$ but without requiring the knowledge of $\theta^*$ or the system functions $h_d,f$. This objective could be perfectly performed if $\theta^*$ was known and substituted in (\ref{systemforlsinput}).

The control parameter vector estimation can be done in different ways, leading to different ES schemes, even for the fixed control structure (\ref{systemforlsinput}). The assumption that $h$ has a maximum is without loss of generality, considering a maximum seeking task. Minimum seeking case would be treated identically, replacing $y$ with $-y$ in the subsequent feedback design.

In the next section, existing classical perturbation based ES approach will be reviewed to give an idea about our proposed design and to later use in simulation comparisons.

\section{Classical Perturbation Based Extremum Seeking for Dynamic Systems} \label{pes}
In the classical ES approach shown in Fig.\ref{gradient_MISO}, a high pass filter, a multiplier, and a lowpass filter are used to find the extremum. A general single input nonlinear system is considered in the design of \cite{krsticwang}. A multi input ES approach is examined in \cite{ghaffarikrstic}. 
\begin{figure}[h!] 
\centering
\includegraphics[width=0.45 \textwidth , height=0.27 \textwidth]{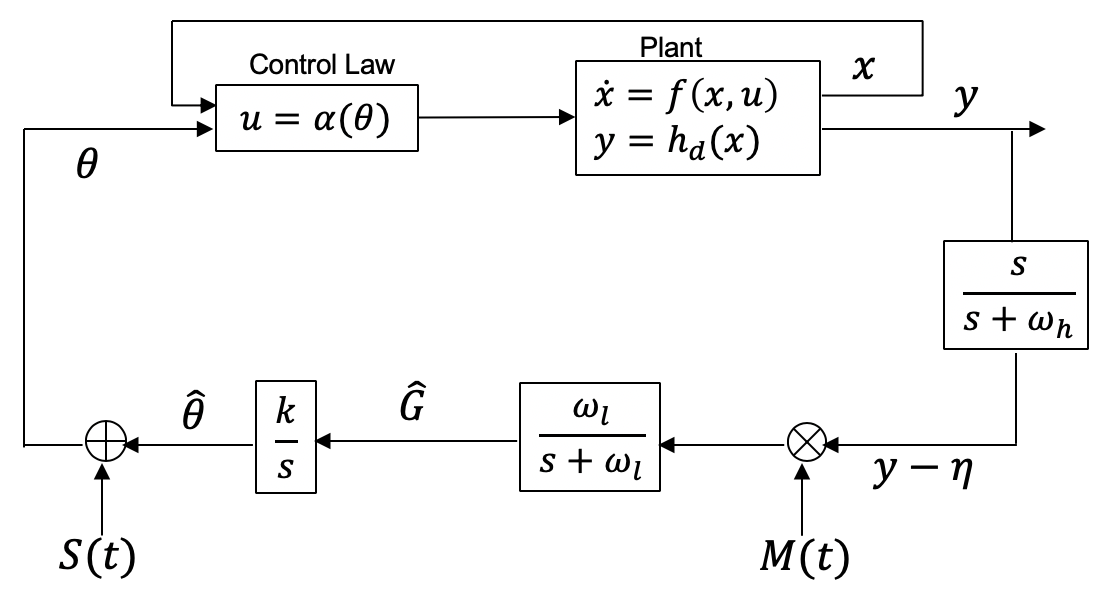}
\caption{Classic perturbation based ES scheme for multi input dynamic systems given by \cite{ghaffarikrstic}.}
\label{gradient_MISO}
\end{figure}
In the approach in \cite{krsticbook,krsticwang}, the control law (\ref{systemforlsinput}) feeding the plant (\ref{systemforls}) is tuned via the time-varying parameter $\theta=\begin{bmatrix} \theta_1, \theta_2,\cdots,\theta_N \end{bmatrix}^T$ that is produced by
\begin{equation} \label{MISO}
\begin{aligned}
 \theta (t)=\hat{\theta} (t)+S(t),
\end{aligned}
\end{equation}  
where 
\begin{equation}  \label{grad1}
\begin{aligned}
S(t)&=\begin{bmatrix} a_{1}sin(\omega_{1} t) & a_{2}sin(\omega_{2} t) & \cdots & a_{N}sin(\omega_{N} t) \end{bmatrix} ^{T},
\end{aligned}
\end{equation} 
and $\hat{\theta}(t)$ is generated by
\begin{equation}  \label{grad2}
\begin{aligned}
\dot{\hat{\theta}} (t)&=k\hat{G} (t), &\\ \dot{\hat{G}} (t)&=\omega_{l} M(t) (y (t)-\eta (t))-\omega_{l}\hat{G}(t) , & \\  \dot{\eta} (t)&=\omega_{h}\left(y(t)-\eta (t) \right). &
\end{aligned}
\end{equation}  
Perturbation signal is selected as
\\$M(t)=\begin{bmatrix} \frac{2}{a_{1}} sin(\omega_{1} t) & \frac{2}{a_{2}}sin(\omega_{2} t) & ... & \frac{2}{a_{N}}sin(\omega_{N} t) \end{bmatrix} ^{T}$. 
\\In the next two sections, we develop RLS estimation based ES scheme with forgetting factor instead of the approach of Section 3. Our proposed RLS estimation based adaptive ES scheme will be separately developed for two cases: for scalar parameter $(N=1)$ systems and for vector parameter $(N>1)$ systems, in Sections 4 and 5, respectively. 

\section{RLS based ES Design for Scalar Parameter Systems}
\subsection{Static Maps} \label{smap}
Consider the static map (\ref{staticmap_ls}) and the control law (\ref{systemforlsinput}) for scalar case, $N=1$, under Assumptions \ref{assump_1} and \ref{assump_2} about the closed-loop system. The proposed scheme is depicted in Fig. \ref{esc_ls_block_staticmap}.

RLS estimation based ES block shown in Fig. \ref{esc_ls_block_staticmap}  consists of two parts: an RLS based adaptive parameter identifier estimating the gradient $h_\theta = \frac{\partial y}{\partial \theta}$ and a control law to be fed by this estimate.


\begin{figure}[h!] 
\centering
\includegraphics[width=0.4 \textwidth , height=0.15 \textwidth]{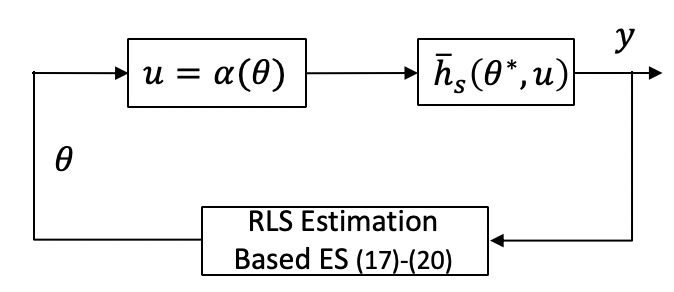}
\caption{RLS based ES scheme for scalar parameter static maps.}
\label{esc_ls_block_staticmap}
\end{figure}

Consider the static map equation (\ref{staticmap_ls}). In this equation, the time derivative of the output $y$ is given by 

\begin{equation} \label{staticoutput}
\dot{y}=h_\theta \dot{\theta}.
\end{equation}
Design of the RLS based estimator to generate $\hat{h}_\theta$ considers  the relation (\ref{staticoutput}) that is in the linear parametric model form. 
\begin{equation} \label{lpm_filter_staticmap}
z=h_\theta \phi.
\end{equation}
where  
\begin{equation}\label{eq:z_ydot}
z=\dot{y}, \quad \phi=\dot{\theta}.     
\end{equation} 
If $\dot{y}$ is not available for measurement, then the regressor signals can be generated as
\begin{equation} \label{z,phi}
\begin{aligned}
z=\frac{s}{s+\omega_l} [y], \quad \phi= \frac{1}{s+\omega_l}[\dot{\theta}],
\end{aligned}
\end{equation}
i.e.,
\begin{equation}
\begin{aligned}
\dot{z}= -\omega_l z + \dot{y}, \quad \dot{\phi}= -\phi \omega_l +\dot{\theta},
\end{aligned}
\end{equation}
where $\omega_l > 0$ is a constant design parameter.  The control law generating $\theta$ is proposed to be 
\begin{equation} \label{siso_static_rls}
\dot{\theta}=k\hat{h}_\theta, \quad k>0.
\end{equation}
Assuming that the time variation of $h_\theta$ is sufficiently slow, we design an RLS estimator for the parametric model  (\ref{lpm_filter_staticmap}) as follows:
\begin{equation} \label{siso_static_rls1}
\dot{\hat{h}}_\theta = p \epsilon \phi,
\end{equation}
\begin{equation}  \label{siso_static_rls2}
\dot{p} = \beta p - p^2 \phi^2,
\end{equation}
\begin{equation}  \label{siso_static_rls3}
\epsilon = z-\hat{h}_\theta \phi, 
\end{equation}
where $\beta > 0$ is forgetting factor and $p$ is the covariance term. The overall ES scheme producing $\theta(t)$ can be summarized by (\ref{siso_static_rls}), (\ref{siso_static_rls1}), (\ref{siso_static_rls2}), and (\ref{siso_static_rls3}).
\subsection{Dynamic Systems} \label{dsystem}
The RLS estimation based ES control scheme (\ref{siso_static_rls})-(\ref{siso_static_rls3}) applies to the dynamic system (\ref{systemforls})-(\ref{y_ls2}) for $N=1$ with the control law (\ref{systemforlsinput}) under Assumptions \ref{assump3} and \ref{assump5}. The proposed ES scheme is depicted in Fig. \ref{esc_ls_block}. 

\begin{figure}[h!] 
\centering
\includegraphics[width=0.4 \textwidth , height=0.2 \textwidth]{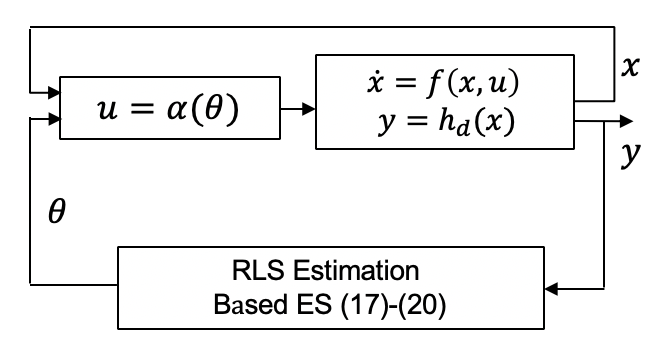}
\caption{RLS based ES scheme for scalar parameter dynamic systems.}
\label{esc_ls_block}
\end{figure}

\subsection{Stability Analysis} \label{stability_section}
In this section, stability proof of the proposed schemes in Sections \ref{smap} and \ref{dsystem} will be presented. We know that $\theta^*$ is the equilibrium point and the estimated gradient will be $h_\theta=0$ at the equilibrium point $\theta=\theta^*$. We can write our stability result as follows:
\begin{theorem}
Consider the RLS estimation based ES scheme given in Figs. \ref{esc_ls_block_staticmap}, \ref{esc_ls_block} and defined in (\ref{siso_static_rls}) - (\ref{siso_static_rls3}) with $z$ and $\phi$ as given in \eqref{eq:z_ydot} or (\ref{z,phi}), and Assumptions \ref{assump_1} - \ref{assump5}. For any initial condition $\hat{\theta}(0) \in \mathbb{R}^N$ and adaptation gain $k$, $\theta(t)$ asymptotically converges to small neighborhood of extremum parameter $\theta^*$.

%
\end{theorem}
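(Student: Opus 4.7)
The plan is to rewrite the closed loop in error coordinates, verify that the RLS covariance stays in a bounded positive range, and then close the argument with a composite Lyapunov function that glues together the distance to the extremum with the weighted gradient-estimation error.

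\textbf{Error coordinates.} Define the gradient-estimation error $\tilde h_\theta = \hat h_\theta - h_\theta$. Substituting $z = h_\theta \phi$ into (\ref{siso_static_rls3}) yields $\epsilon = -\tilde h_\theta \phi$, so (\ref{siso_static_rls1}) becomes $\dot{\hat h}_\theta = -p\phi^2 \tilde h_\theta$. Differentiating $h_\theta$ along (\ref{siso_static_rls}) gives $\dot h_\theta = h_{\theta\theta}\dot\theta = k h_{\theta\theta}\hat h_\theta$, and $\phi = \dot\theta = k\hat h_\theta = k(h_\theta + \tilde h_\theta)$. Hence
\begin{equation*}
\dot{\tilde h}_\theta = -p\phi^2\tilde h_\theta - k h_{\theta\theta}(h_\theta + \tilde h_\theta).
\end{equation*}
Under Assumption \ref{assump_1} the second term is $O(k)$ and can be treated as a small perturbation that vanishes either for small adaptation gain or near $\theta^*$.

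\textbf{Covariance bounds.} Standard forgetting-factor RLS arguments (see e.g.\ Ioannou--Sun) applied to (\ref{siso_static_rls2}) show that $p(t)$ stays strictly positive, and that $p(t) \leq p_{\max}$ whenever $\phi$ does not vanish identically, which holds as long as $\hat h_\theta \neq 0$. The forgetting term $\beta p$ prevents covariance collapse when $\phi$ is small, but it also causes $p$ to drift upward if $\hat h_\theta$ approaches zero; this drift is precisely what prevents exact asymptotic convergence and produces the ``small neighborhood'' language in the statement.

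\textbf{Composite Lyapunov argument.} Using Assumption \ref{assump_2}, $V_1(\theta) := y^* - h_s(\alpha(\theta))$ is nonnegative, vanishes only at $\theta=\theta^*$, and is locally quadratic. Set
\begin{equation*}
V(\theta,\tilde h_\theta, p) = V_1(\theta) + \frac{\gamma}{2}\,\frac{\tilde h_\theta^2}{p},\qquad \gamma>0.
\end{equation*}
Computing $\dot V$ along the closed loop with $\dot y = k h_\theta(h_\theta+\tilde h_\theta)$ and the error dynamics above gives, after the cancellation of the $\pm \tfrac{\gamma}{2}\phi^2\tilde h_\theta^2$ contributions from $\dot{\tilde h}_\theta$ and $\dot p$,
\begin{equation*}
\dot V = -k h_\theta^2 - k h_\theta \tilde h_\theta - \frac{\gamma}{2}\phi^2 \tilde h_\theta^2 - \frac{\gamma\beta}{2p}\tilde h_\theta^2 + O(k).
\end{equation*}
Applying Young's inequality to the cross term and selecting $\gamma > k p_{\max}/\beta$, the expression collapses to $\dot V \leq -\tfrac{k}{2} h_\theta^2 - \delta \tilde h_\theta^2 + O(k)$ for some $\delta>0$. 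Invoking concavity (Assumption \ref{assump_2}), $h_\theta^2$ lower-bounds a quadratic in $\theta - \theta^*$, and Barbalat/LaSalle arguments yield convergence of $\theta(t)$ into a residual set whose size is controlled by $k$ and $\beta$. For the dynamic system of Section \ref{dsystem}, Assumption \ref{assump3} makes $x=l(\theta)$ an exponentially attractive slow manifold; a standard singular-perturbation/averaging argument in the spirit of \cite{krsticwang} transfers the static-map analysis to the dynamic case, the state-equilibration error entering as another $O(k)$ perturbation.

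\textbf{Main obstacle.} The hard part is the lack of a dither signal: as $\hat h_\theta \to 0$ the regressor $\phi$ also tends to zero, so the parametric model (\ref{lpm_filter_staticmap}) loses persistence of excitation exactly where we need convergence to hold. This forces the analysis to settle for practical (not exact) convergence and to carefully balance the forgetting factor $\beta$ against the ensuing covariance drift, which ultimately sets the size of the neighborhood around $\theta^*$ in the conclusion.
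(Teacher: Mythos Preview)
Your route is genuinely different from the paper's. The paper's proof is a short sign argument: it takes $V=\tfrac12(\theta-\theta^*)^2$, computes $\dot V=k\hat h_\theta\tilde\theta$, and then simply asserts that concavity forces $\hat h_\theta$ and $\tilde\theta$ to have opposite signs (splitting on $\theta(0)<\theta^*$ versus $\theta(0)>\theta^*$), so $\dot V<0$; the estimation dynamics enter only through a one-line appeal to a standard RLS-with-forgetting result in \cite{fidan} for the PE $\Rightarrow$ exponential convergence step. In particular the paper never couples the gradient-estimation error into the Lyapunov function and does not analyse the covariance or the $\dot h_\theta$ drift. Your composite function $V_1+\tfrac{\gamma}{2p}\tilde h_\theta^2$ actually closes that loop: the $\dot p$ and $\dot{\tilde h}_\theta$ terms cancel the way you describe, Young's inequality plus the choice $\gamma>kp_{\max}/\beta$ gives the negative-definite bound, and you correctly isolate the real obstruction (loss of PE as $\phi\to 0$ and the attendant covariance drift) that the paper's argument simply does not address. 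What you gain is an honest quantification of the residual set in terms of $k$, $\beta$ and $p_{\max}$, and a clean hook for the singular-perturbation reduction in the dynamic case; what the paper's version gains is brevity, at the cost of tacitly assuming that $\hat h_\theta$ already carries the sign of the true gradient.
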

\begin{proof}
We consider the Lyapunov function as
\begin{equation}
V(\theta(t))=\frac{1}{2}\left( \theta(t)-\theta^* \right)^2 = \frac{1}{2}\tilde{\theta}^2.
\end{equation}
We write the time derivative of $V$ along the solutions of (\ref{siso_static_rls}) as
\begin{equation} \label{lyapunov}
\dot{V}=\dot{\theta} \left( \theta(t)-\theta^* \right)=\dot{\theta} \tilde{\theta}.
\end{equation}
Substituting (\ref{siso_static_rls}) into (\ref{lyapunov}), we obtain 
\begin{equation} \label{lyapunov2}
\dot{V}=k \hat{h}_\theta \tilde{\theta}.
\end{equation}
For the maximum case, $k>0$. Negative definiteness of (\ref{lyapunov2}) depends on the initial condition $\theta_0$ that determines the signs of $\hat{h}_\theta$ and $\tilde{\theta}$. If $\theta(0) < \theta^*$, then $\hat{h}_\theta >0$ and $\tilde{\theta} < 0$. On the other hand, if $\theta(0) > \theta^*$, then $\hat{h}_\theta <0$ and $\tilde{\theta} > 0$. Hence, for both cases $\dot{V} <0$. We also need to examine the forgetting factor $\beta$ and the persistent excitation (PE) of $\phi$. If $\phi$ is PE, then (\ref{siso_static_rls}) guarantees that $p \in \mathcal{L}_{\infty}$ and $\theta(t)\to\theta^{*}$ as $t\to\infty$.  When $\beta>0$, the convergence of $\theta(t)\to\theta^{*}$ is exponential (\cite{fidan}).



\end{proof}

\section{RLS based ES Design for Vector Parameter Systems}
In this section, the proposed RLS estimation based ES scheme is extended to the systems with vector parameters $(N>1)$. Similar to the classical gradient based analysis, small sinusoidal perturbation signals with different frequencies ($\omega_{1},\cdots,\omega_{N}$) are added to the control signals to provide sufficiently rich excitation.
\subsection{Static Maps}
\begin{figure}[h] 
\centering
\includegraphics[width=0.4 \textwidth , height=0.2 \textwidth]{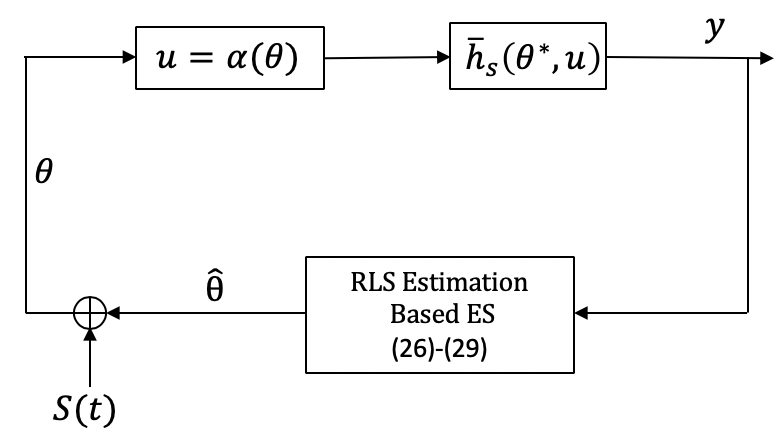}
\caption{RLS based ES scheme for vector parameter static maps.}
\label{esc_ls_block_miso_staticmap}
\end{figure}
Consider the block diagram in Fig. \ref{esc_ls_block_miso_staticmap} for the static map in (\ref{staticmap_ls}). The time derivative of (\ref{staticmap_ls}) is given by
\begin{equation} \label{staticoutput_miso}
\dot{y}=h_\theta^T \dot{\theta},
\end{equation}
which, similarly to \eqref{lpm_filter_staticmap}, can be written in the linear parametric form
\begin{equation}\label{eq:SPMvector}
z=h_\theta^T \phi,     
\end{equation}  
where $z$ and $\phi$ are again defined by either \eqref{eq:z_ydot} or (\ref{z,phi}). The control law \eqref{siso_static_rls} is used for updating $\theta$ in the vector case as well. 
The design of the RLS estimator to produce $\hat{h}_\theta$ is based on the parametric model  \eqref{eq:SPMvector} and is given as follows (\cite{fidan}):
\begin{equation} \label{siso_static_rls11}
\dot{\hat{h}}_\theta=P\epsilon\phi,
\end{equation}
\begin{equation}  \label{siso_static_rls21}
\dot{P} = \beta P - P \phi \phi^T P,
\end{equation}
\begin{equation}  \label{siso_static_rls31}
\epsilon = z-\hat{h}_\theta^T \phi, 
\end{equation}
where $\beta$ is the forgetting factor and $P$ is the covariance matrix of the RLS algorithm. The control law generating $\theta$ is proposed to be
\begin{equation} \label{miso_controllaw}
\dot{\hat{\theta}}=k\hat{h}_\theta, \quad k>0. 
\end{equation}
\begin{equation} \label{miso}
 \theta(t)=\hat{\theta}(t)+S(t),
\end{equation}
where $S(t)$ is defined as in \eqref{grad1}. Different from scalar parameter systems, we use perturbation signals, $S(t)$. The need to use of dither signals in vector parameter systems is that dither signals with different frequencies can be implemented on each input signal to achieve overall PE. 
\subsection{Dynamic Systems}
\begin{figure}[h!] 
\centering
\includegraphics[width=0.4 \textwidth , height=0.23 \textwidth]{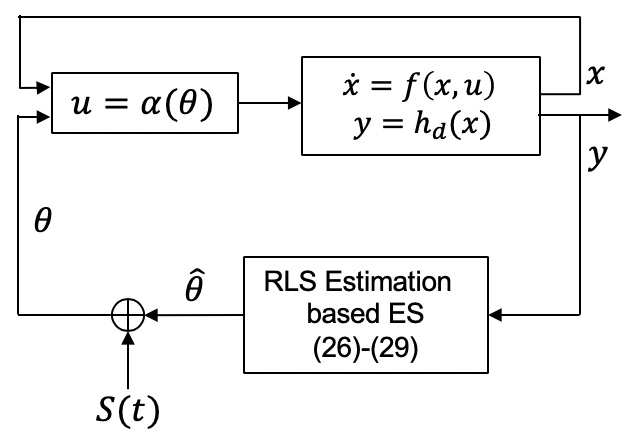}
\caption{RLS based ES scheme for vector parameter dynamic systems.}
\label{esc_ls_block_miso}
\end{figure}
The RLS estimation based ES scheme  (\ref{siso_static_rls11}) - (\ref{miso_controllaw}) applies to the dynamic system (\ref{systemforls})-(\ref{y_ls2}) with control law (\ref{systemforlsinput}) under Assumptions \ref{assump3} and \ref{assump5} for vector parameter systems. Block diagram of the proposed ES scheme is given in Fig.\ref{esc_ls_block_miso}. 
\subsection{Stability Analysis} 
The intuition in \eqref{miso} is to satisfy persistence of excitation for $N$-dimensional $\phi$ by introducing at least one distinct dither frequency for each input, following the standard perturbation based ES control approaches mentioned in Section \ref{pes}. Similar to the analysis in Section \ref{stability_section}, consider the Lyapunov function as
\begin{equation}
V(\tilde{\theta}(t))=\frac{1}{2}\tilde{\theta}^T\tilde{\theta}.
\end{equation}
We write the time derivative of $V$ along the solutions of (\ref{miso_controllaw}) as
\begin{equation} \label{lyapunovmiso}
\dot{V}=\tilde{\theta}^T  \dot{\tilde{\theta}} = \tilde{\theta}^T  \dot{\theta}.
\end{equation}
Substituting (\ref{miso}) into (\ref{lyapunovmiso}), we obtain 
\begin{equation} \label{lyapunov2miso}
\dot{V}=\tilde{\theta}^T (k \hat{h}_\theta +\dot{S}) .
\end{equation}
The relationship between $\tilde{\theta}$ and $ \hat{h}_\theta$ in Section 4.3 applies to vector parameter case. The stability again depends on $k$, initial condition $\theta(0)$, forgetting factor $\beta$, and PE of $\phi$, that is guaranteed by addition of dither signals in (\ref{miso}). Hence, $P \in \mathcal{L}_{\infty}$ and $\theta(t)\to\theta^{*}$ as $t\to\infty$. 
\section{Simulations}
In this section, we present simulation results to show the validity of the proposed schemes. We will present two examples for scalar parameter and vector parameter cases with their comparison results with classical ES method in Section \ref{pes}.
\subsection{Scalar Parameter Simulation Example}
Consider the following model
\begin{equation}
\begin{aligned}
y&=10m(u), &\\ m(u)&=k_1 \left( 1-e^{-k_2 u} \right)-k_3 u  &\\ u&=\theta,& 
\end{aligned}
\end{equation}
where $\theta^*=0.3$.  $\theta_0=0.01$ is chosen as initial value for both schemes. $k_1=1.05, k_2=23, k_3=0.52$ are given. For RLS estimation based ES scheme, the following parameters are used: $k_{ls}=0.01$, $p_0=10^3$, and $\beta=0.98$ are given.  For classical ES scheme, the following parameters are given: $k=0.08$, $\omega_h=0.6$, $\omega_l=0.8$, $S(t)=0.01\sin3t$, and $M(t)=sin3t$. We apply the Gaussian measurement noise as ($\sigma=0.05$) for both gradient and RLS algorithms.
\noindent
We apply RLS estimation based ES scheme in Fig.\ref{esc_ls_block}. The results for this example is given in Fig.\ref{y_result}. It is obvious that  proposed scheme can reach a neighborhood of the extremum point $\theta^*=0.3$ at $y^*=8.85$ less than 2 second while classical ES finds the extremum point very late and cannot maintain that extremum point under measurement noise. 
\begin{figure}[h!] 
\centering
\includegraphics[width=0.44 \textwidth , height=0.25 \textwidth]{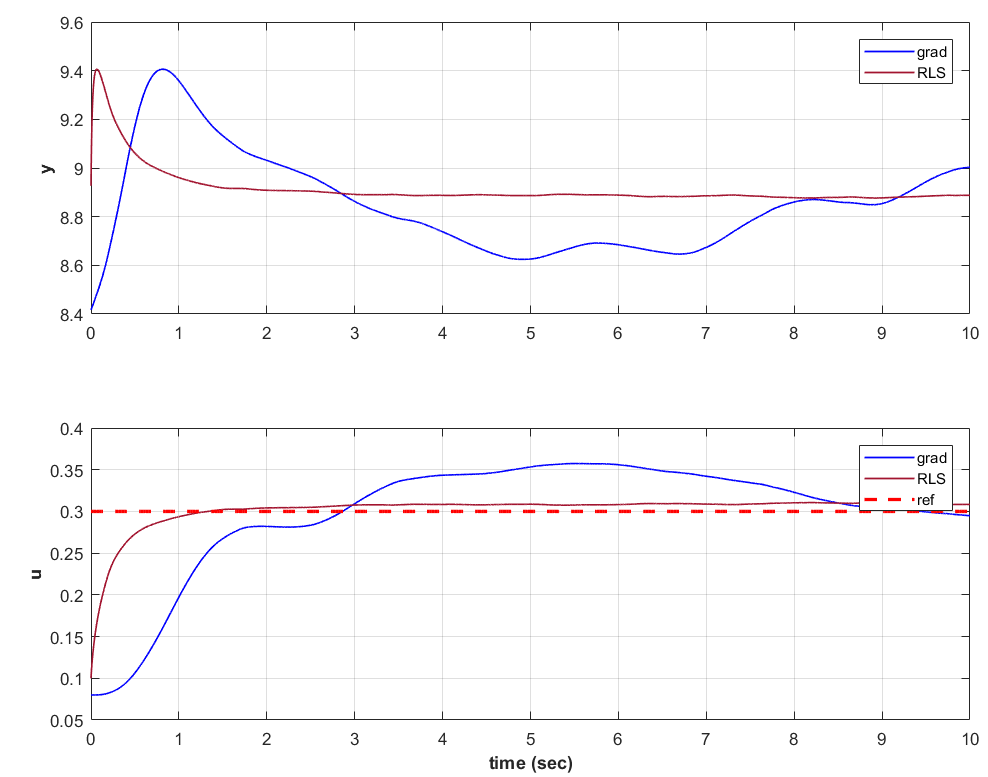}
\caption{Single parameter RLS estimation based ES results.}
\label{y_result}
\end{figure}
\subsection{Vector Parameter Simulation Example}
Consider the following model 
\begin{equation}
\begin{aligned}
y&=y_1+y_2, &\\ y_1&=am(u_1), ~ m(u_1)= (2m^*_1 u^*_1 u_1)/(u^{*2}_1+ u^2_1), &\\ y_2&=am(u_2), ~ m(u_2)= (2m^*_2 u^*_2 u_2)/(u^{*2}_2+ u^2_2),&\\ u&=[u_1,~u_2]=[\theta_1, ~ \theta_2].& 
\end{aligned}
\end{equation}

\noindent
where $[\theta^*_1, ~ \theta^*_2]=[0.2, ~0.3]$. For both schemes, initial values are given as $u_0=[0.1 ,~ 0.1].$ We aim to reach $y^*_1 (\theta^*_1)=5$ and $y^*_2 (\theta^*_2)=9$. For RLS estimation based ES scheme, the following parameters are used: $k=[ 0.01, ~ 0.01 ]$, $P_0=10^4$, $\beta=0.98$, and $S(t)=[ 0.01 \sin 7t, ~ 0.01 \sin 10t ]$ are given.  For classical ES scheme, the following parameters are given: $k=[0.02, ~0.01]$, $\omega_h=[ 0.6,~ 0.6 ] $, $\omega_l=[ 0.8, ~0.8 ]$, $S(t)=[ 0.01 \sin t,~ 0.01 \sin 2t]$, and $M(t)=[ 4.5 \sin 5t, ~11 \sin 5t ]$. We apply the Gaussian measurement noise as ($\sigma=0.05$) for both gradient and RLS algorithms.
\noindent
Simulation results are given in Fig.\ref{y_miso} for both RLS estimation based and classical ES schemes. It is clear that the results taken with RLS  can converge the extremum point and find the maximized output $y^*$ while classical ES scheme has difficulty to reach the extremum point. One reason for this difficulty is that in classical ES scheme has many tuning parameters that must be tuned accordingly.  
\noindent
For vector case, we also emphasize the need to apply perturbation terms to the scheme in order to observe multiple input channels separately. When there is no perturbation signal applied, the inputs cannot be distinguished and converge to an average value that caused to reach a value near the maximum. Similar to scalar case, RLS estimation based ES scheme outweighs classical ES scheme in terms of reaching extremum under measurement noises.

\subsection{ABS Simulation Example}
In this section, we also tested our ES scheme in ABS using MATLAB/Simulink. Then, we compared its performance with gradient based ES scheme developed by \cite{krsticbook}. The wheel characteristics are given by the following set of equations
\begin{figure}[h!] 
\centering
\includegraphics[width=0.44 \textwidth , height=0.35 \textwidth]{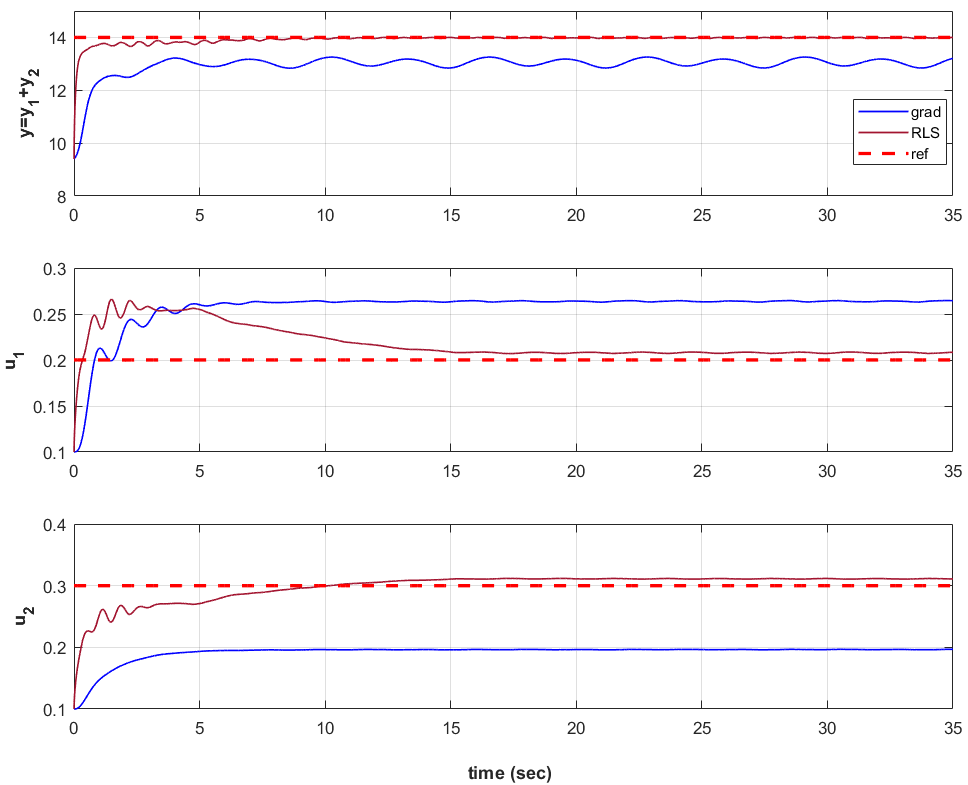}
\caption{Vector parameter RLS estimation based ES results.}
\label{y_miso}
\end{figure}
\begin{equation} \label{model}
\begin{aligned}
m\dot{\nu}&=N\mu(\lambda), &\\ I\dot{\omega}&=-B\omega - N R \mu(\lambda)+\tau, &
\end{aligned}
\end{equation}
where $v, \omega, m, N, R, I$ are linear velocity, angular velocity, the mass, the weight, radius, and the moment of inertia of the wheel, respectively. $B\omega$ is the bearing friction torque, $\tau$ is braking torque, $\mu(\lambda)$ is the friction force coefficient. 
\noindent
$\lambda$ is the wheel slip which is defined as 
\begin{equation} \label{wheel}
\lambda(v,\omega)=\frac{R\omega-\nu}{\nu}.
\end{equation}
Controller design procedure are identical to the design in \cite{krsticbook}. 
The parameters that are identical in both schemes are given as follows: $m=400 kg$, $R=0.3 m$, $I=1.7 kgm^2$, $B=0.01 kg/s$. Perturbation signal amplitude and frequency is selected as $a=0.01$,  $\omega=3$, high pass, low pass and regulation gain are selected as $\omega_{h}=0.6$, $\omega_{l}=0.8$, $k=6$ in gradient based scheme equations (\ref{MISO}), (\ref{grad1}), and (\ref{grad2}).
\noindent
$k=-0.01$ is used in ABS case and $\beta=0.95$ is selected for RLS based scheme. The simulation for both gradient and RLS schemes is performed under the Gaussian noise ($\sigma=0.1$) in longitudinal acceleration measurement, $\dot{v}$. Initial conditions are selected the same in both schemes for a fair comparison. We use the approximation model (\ref{krsticmu}) in simulations to see the effect of the proposed schemes.
\begin{equation} \label{krsticmu}
\mu (\lambda)=2\mu_{max}\frac{\lambda^{*} \lambda}{{{\lambda}^{*2}}+\lambda^{2}},
\end{equation}
\noindent
where (\ref{krsticmu}) has a maximum at $\lambda=\lambda^{*}$ with $\mu(\lambda^*)=\mu_{m}$.
\noindent
For simulation, we choose wet road since it is one of the safety critical conditions. Simulation results of ABS for gradient/RLS based scheme comparison are given in Fig.\ref{ABS}. Results show that vehicle stopping time of RLS parameter estimation based ES in an emergency situation is less than that of gradient one. Slip ratio estimation is almost 2 sec quicker with RLS parameter estimation, can be seen in Fig. \ref{ABS_dry_mu_lambda}. RLS based ES scheme gives better results under measurement noise and can reach the maximum deceleration in less time.
\begin{figure}[h!] 
\centering
\begin{subfigure}{0.47\textwidth}
\includegraphics[width=1 \textwidth , height=0.78\textwidth]{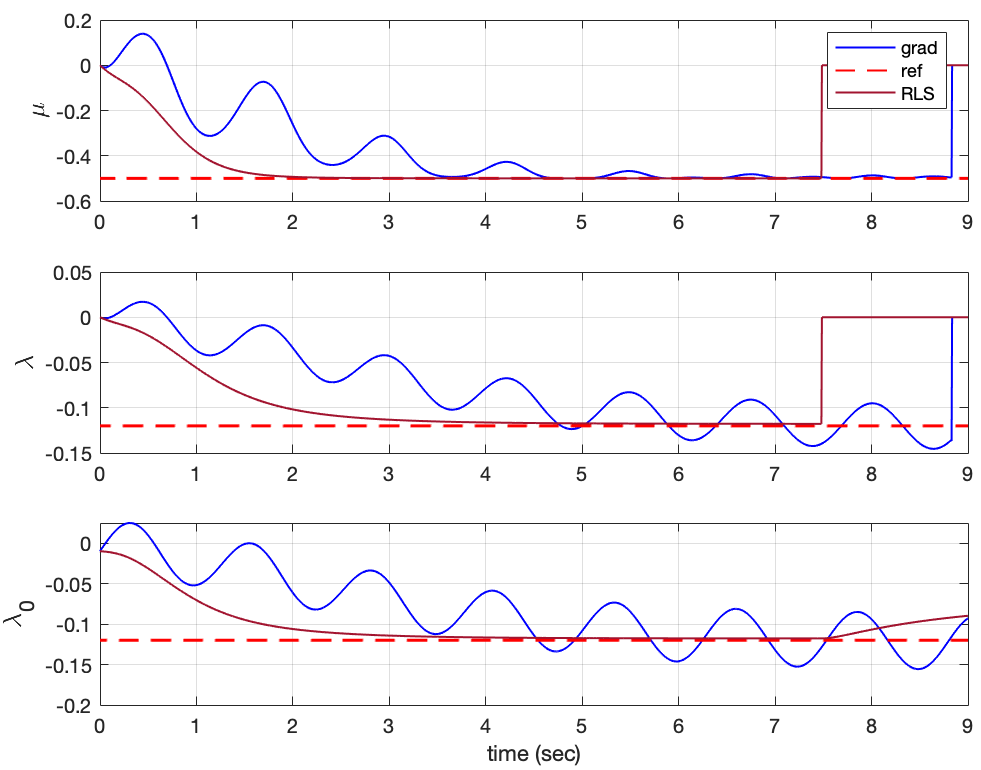}
\caption{Friction force coefficient and estimated slip results for ABS.}
\label{ABS_dry_mu_lambda}
\end{subfigure}
~
\begin{subfigure}{0.47\textwidth}
\includegraphics[width=1 \textwidth , height=0.78 \textwidth]{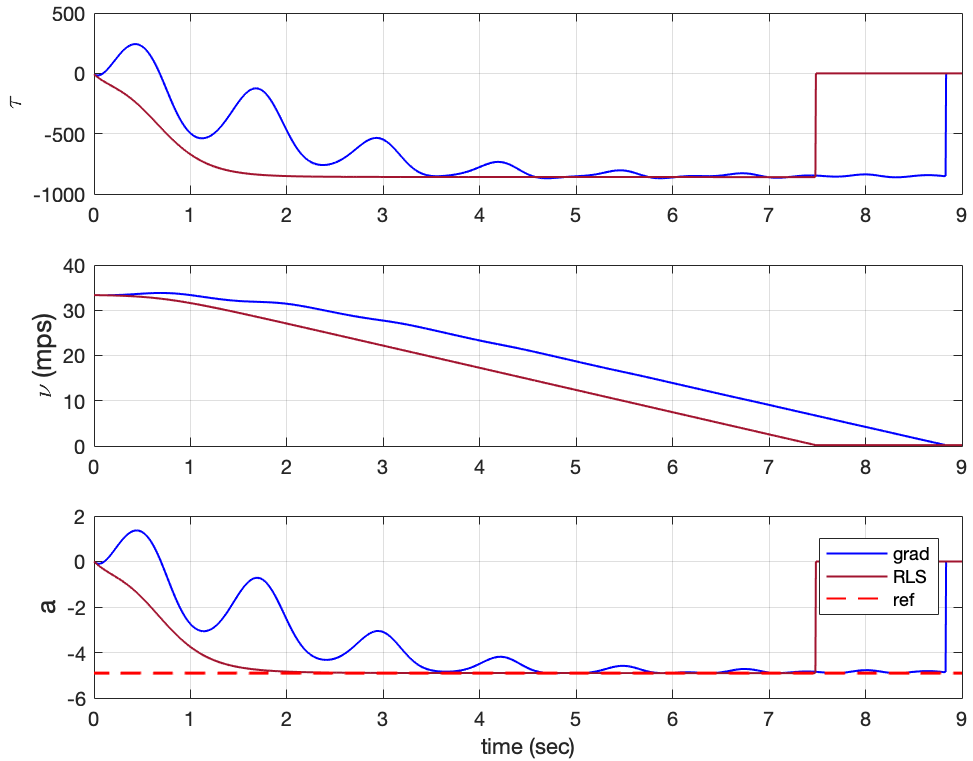}
\caption{Braking torque, velocity and deceleration results for ABS.}
\label{ABS_dry_tau}
\end{subfigure}
    \caption{Wet road comparison results for ABS.}
        \label{ABS}
\end{figure}
\section{Conclusion}
This paper focuses on designing an RLS parameter estimation based ES scheme for scalar parameter and vector parameter static map and dynamic systems. Their stability conditions are stated for each case. The proposed ES scheme does not need perturbation signals for scalar parameter systems; however, the proposed ES scheme needs perturbation signals with different frequencies for vector parameter systems. Proposed scheme is applied to different simulation scenarios and compared to classical gradient estimation based ES under measurement noise. The results show the validity and effectiveness of RLS parameter estimation based ES scheme over gradient one.


   
\bibliographystyle{IEEEtran}
\bibliography{Zengin_cdc2020_adaptive_ES}   

\begin{thebibliography}{10}
\providecommand{\url}[1]{#1}
\csname url@samestyle\endcsname
\providecommand{\newblock}{\relax}
\providecommand{\bibinfo}[2]{#2}
\providecommand{\BIBentrySTDinterwordspacing}{\spaceskip=0pt\relax}
\providecommand{\BIBentryALTinterwordstretchfactor}{4}
\providecommand{\BIBentryALTinterwordspacing}{\spaceskip=\fontdimen2\font plus
\BIBentryALTinterwordstretchfactor\fontdimen3\font minus
  \fontdimen4\font\relax}
\providecommand{\BIBforeignlanguage}[2]{{%
\expandafter\ifx\csname l@#1\endcsname\relax
\typeout{** WARNING: IEEEtran.bst: No hyphenation pattern has been}%
\typeout{** loaded for the language `#1'. Using the pattern for}%
\typeout{** the default language instead.}%
\else
\language=\csname l@#1\endcsname
\fi
#2}}
\providecommand{\BIBdecl}{\relax}
\BIBdecl

\bibitem{krsticbook}
K.~B. Ariyur and M.~Krstic, \emph{Real-time Optimization by Extremum-Seeking
  Control}.\hskip 1em plus 0.5em minus 0.4em\relax John Wiley \& Sons, 2003.

\bibitem{hsin1999}
H.~Wang, M.~Krstic, and G.~Bastin, ``Optimizing bioreactors by extremum
  seeking,'' \emph{International Journal of Adaptive Control and Signal
  Processing}, vol.~13, no. 651, p. 669, 1999.

\bibitem{bastin2009}
G.~Bastin, D.~Nesic, Y.~Tan, and I.~Mareels, ``On extremum seeking in
  bioprocesses with multivalued cost functions,'' \emph{Biotechnology
  Progress}, vol.~25, no.~3, pp. 683--689, 2009.

\bibitem{drakunov1995}
S.~Drakunov, U.~Ozguner, P.~Dix, and B.~Ashrafi, ``Abs control using optimum
  search via sliding modes,'' \emph{IEEE Transactions on Control Systems
  Technology}, vol.~3, no.~1, pp. 79--85, 1995.

\bibitem{yu2002}
H.~Yu and U.~Ozguner, ``Extremum-seeking control strategy for abs system with
  time delay,'' in \emph{Proc. IEEE American Control Conference}, vol.~5, 2002,
  pp. 3753--3758.

\bibitem{dincmen2014}
E.~Dincmen, B.~Guvenc, and T.~Acarman, ``Extremum-seeking control of abs
  braking in road vehicles with lateral force improvement,'' \emph{IEEE
  Transactions on Control Systems Technology}, vol.~22, no.~1, pp. 230--237,
  2014.

\bibitem{dincmen2014b}
E.~Dincmen, ``Adaptive extremum seeking scheme for abs control,'' in \emph{13th
  IEEE International Workshop on Variable Structure Systems}, 2014, pp. 1--6.

\bibitem{mayhew2007}
C.~Mayhew, R.~Sanfelice, and A.~Teel, ``Robust source-seeking hybrid
  controllers for autonomous vehicles,'' in \emph{Proc. IEEE American Control
  Conference}, 2007, pp. 1185--1190.

\bibitem{zhang2009}
C.~Zhang and R.~Ordonez, ``Robust and adaptive design of numerical
  optimization-based extremum seeking control,'' \emph{Automatica}, vol.~45,
  no.~3, pp. 634--646, 2009.

\bibitem{lin2017}
J.~Lin, S.~Song, K.~You, and M.~Krstic, ``Overshoot-free nonholonomic source
  seeking in 3-d,'' \emph{International Journal of Adaptive Control and Signal
  Processing}, vol.~31, no.~9, pp. 1285--1295, 2017.

\bibitem{biyik2008}
E.~Biyik and M.~Arcak, ``Gradient climbing in formation via extremum seeking
  and passivity-based coordination rules,'' \emph{Asian Journal of Control},
  vol.~10, no.~2, pp. 201--211, 2008.

\bibitem{stankovic2009}
M.~Stankovic and D.~Stipanovic, ``Stochastic extremum seeking with applications
  to mobile sensor networks,'' in \emph{Proc. IEEE American Control
  Conference}, 2009, pp. 5622--5627.

\bibitem{moore2010}
B.~Moore and C.~Canudas-de Wit, ``Source seeking via collaborative measurements
  by a circular formation of agents,'' in \emph{Proc. IEEE American Control
  Conference}, 2010, pp. 6417--6422.

\bibitem{fu_ozguner}
L.~Fu and U.~Ozguner, ``Extremum seeking with sliding mode gradient estimation
  and asymptotic regulation for a class of nonlinear systems,''
  \emph{Automatica}, vol.~47, no.~12, pp. 2595--2603, 2011.

\bibitem{lsesc}
B.~G.~B. Hunnekens, M.~A.~M. Haring, N.~van~de Wouw, and H.~Nijmeijer, ``A
  dither-free extremum-seeking control approach using 1st-order least-squares
  fits for gradient estimation,'' in \emph{53rd IEEE Conference on Decision and
  Control}, 2014, pp. 2679--2684.

\bibitem{nesic_nodither}
D.~Nesic, T.~Nguyen, Y.~Tan, and C.~Manzie, ``A non-gradient approach to global
  extremum seeking: An adaptation of the shubert algorithm,''
  \emph{Automatica}, vol.~49, no.~3, pp. 809--815, 2009.

\bibitem{escrls2016}
M.~Chioua, B.~Srinivasan, M.~Guay, and M.~Perrier, ``Performance improvement of
  extremum seeking control using recursive least square estimation with
  forgetting factor,'' \emph{IFAC-PapersOnLine}, vol.~49, no.~7, pp. 424--429,
  2016.

\bibitem{nesicls}
D.~Nesic, A.~Mohammadi, and C.~Manzie, ``A framework for extremum seeking
  control of systems with parameter uncertainties,'' \emph{IEEE Transactions on
  Automatic Control}, vol.~58, no.~2, pp. 435--448, 2013.

\bibitem{blackman1962}
P.~Blackman, ``Extremum-seeking regulators,'' in \emph{An exposition of
  adaptive control}.\hskip 1em plus 0.5em minus 0.4em\relax Macmillan, 1962.

\bibitem{krsticextremum}
M.~Krstic, ``Performance improvement and limitations in extremum seeking
  control,'' \emph{Systems \& Control Letters}, vol.~39, no.~5, pp. 313--326,
  2000.

\bibitem{krsticwang}
M.~Krstic and H.~H. Wang, ``Stability of extremum seeking feedback for general
  nonlinear dynamic systems,'' \emph{Automatica}, vol.~36, no.~4, pp. 595--601,
  2000.

\bibitem{guay2014}
M.~Guay, ``A time-varying extremum-seeking control approach for discrete-time
  systems,'' \emph{Journal of Process Control}, vol.~24, no.~3, pp. 98--112,
  2014.

\bibitem{guay2015}
M.~Guay and D.~Dochain, ``A time-varying extremum-seeking control approach,''
  \emph{Automatica}, vol.~51, pp. 356--363, 2015.

\bibitem{ghaffarikrstic}
A.~Ghaffari, M.~Krstic, and D.~Nesic, ``Multivariable newton-based extremum
  seeking,'' \emph{Automatica}, vol.~48, no.~8, pp. 1759--1767, 2012.

\bibitem{fidan}
P.~Ioannou and B.~Fidan, \emph{Adaptive Control Tutorial}.\hskip 1em plus 0.5em
  minus 0.4em\relax Society for Industrial and Applied Mathematics, 2006.

\end{thebibliography}
                                                   







\end{document}